	\newcommand{\norm}[1]{\left\lVert#1\right\rVert}
\newtheorem{theorem}{Theorem}
\newtheorem{lemma}{Lemma}
\newtheorem{proof}{Proof}
\newtheorem{remark}{Remark}
\DeclareMathOperator*{\minimize}{minimize}
\DeclareMathOperator*{\st}{subject~to}
\renewcommand{\arraystretch}{1} 
	\DeclareMathAlphabet\mathbfcal{OMS}{cmsy}{b}{n}
	\title{An Input-Output Parametrization of Stabilizing Controllers: amidst Youla and System Level Synthesis}
	 \author{Luca Furieri$^\ast$\thanks{This research was gratefully funded by the European Union ERC Starting Grant CONENE. Antonis Papachristodoulou  
was supported in part by the EPSRC project EP/M002454/1. $^\ast$Luca Furieri and Maryam Kamgarpour are with the Automatic Control Laboratory, Department of Information Technology and Electrical Engineering, ETH Z\"{u}rich, Switzerland. E-mails: {\tt\footnotesize \{furieril, mkamgar\}@control.ee.ethz.ch}} \and Yang Zheng$^\dagger$\thanks{$^\dagger$Yang Zheng and Antonis Papachristodoulou are with the Department of Engineering Science , University of Oxford, United Kingdom. E-mails: {\tt\footnotesize \{yang.zheng, antonis\}@eng.ox.ac.uk}} \and Antonis Papachristodoulou$^\dagger$ \and Maryam Kamgarpour$^\ast$
	%
}
\begin{document}

	\maketitle
	\thispagestyle{empty}
	
	\IEEEpeerreviewmaketitle

	\begin{abstract}
This paper proposes a novel input-output parametrization of the set of internally stabilizing output-feedback controllers for linear time invariant (LTI) systems. Our underlying idea is to directly treat the closed-loop transfer matrices from disturbances to input and output signals as design parameters and exploit their affine relationships.   This input-output perspective is particularly effective when a doubly-coprime factorization is difficult to compute, or an initial stabilizing controller is challenging to find; most previous work requires one of these pre-computation steps. Instead, our approach can bypass such pre-computations, in the sense that a stabilizing controller is computed by directly solving a linear program (LP).  Furthermore, we show that the proposed input-output parametrization allows for computing norm-optimal controllers subject to quadratically invariant (QI) constraints using convex programming.
	\end{abstract}	
\begin{IEEEkeywords}
Linear systems, Optimal control, Distributed control
\end{IEEEkeywords}
	\section{Introduction}
	\label{sec:intro}
	\IEEEPARstart{G}iven a multi-input multi-output linear time invariant (MIMO LTI) system, a classical problem in control theory is to design an output-feedback controller that stabilizes the closed-loop system to external perturbations in the most efficient way. Solving the corresponding optimization problem is known to be computationally hard, partly due to the inherent non-convexity of the set of stabilizing controllers and partly due to the challenge of including additional constraints on the controller in a convex way  \cite{Survey,Witsenhausen}. 
	
 The renowned work \cite{rotkowitz2006characterization} established that, when the constraints on the output-feedback controller are quadratically invariant (QI) with respect to the system, one can compute norm-optimal stabilizing controllers using convex programming.   A limitation of the controller design procedure of \cite{rotkowitz2006characterization} is that the system is required to be \emph{strongly stabilizable}, i.e., a stabilizing output-feedback controller that itself is stable must exist and  be known in advance. Only then,  can a convex optimization problem be cast. However, it can be challenging to find such a stable and stabilizing controller. This gap in the controller design procedure was  addressed in~\cite{sabuau2014youla}, where the authors proposed a Youla-like parametrization~\cite{youla1976modern} to overcome the strong stabilizability assumption. We note that a doubly-coprime factorization  of the plant  must be computed as a preliminary step in~\cite{sabuau2014youla}. For a finite-dimensional and rational plant one can use the procedure of \cite{nett1984connection} to compute a doubly-coprime factorization. However, it was shown that a general internally stabilizable plant does
not necessarily admit a doubly-coprime factorization \cite{anantharam1985stabilization}, and computing one can be challenging even when one does exist \cite{laakkonen2016robust,foias1996robust}. Recently, \cite{sabau2017convex} recognized these difficulties and proposed adapting the so-called \emph{coordinate-free} approach \cite{mori2004elementary} for convex computation of controllers subject to strongly quadratically invariant (SQI) constraints. However, the SQI condition is more restrictive than QI, and the approach of \cite{sabau2017convex} requires an initial  stabilizing controller in advance, which may be challenging to find.
  
  Unlike~\cite{youla1976modern,sabuau2014youla,rotkowitz2006characterization,mori2004elementary,sabau2017convex}, which adopted a purely input-output perspective, \cite{wang2019system} proposed a detailed state-space point of view using the so-called system level approach to controller synthesis. An advantage of the state-space parametrization \cite{wang2019system} is that it does not depend on a doubly-coprime factorization of the system or the knowledge of an initial stabilizing controller. 
  Inspired by \cite{wang2019system}, we raise  the question of whether one could avoid  a detailed state-space perspective and instead adopt a purely input-output one in frequency domain, in order to eliminate the potentially challenging pre-computation steps in \cite{youla1976modern,sabuau2014youla,rotkowitz2006characterization,mori2004elementary,sabau2017convex}. In this paper, we present a positive answer to this question.
  
The contributions of this paper are as follows. First, we show that the set of all internally stabilizing controllers for a given LTI system can be  expressed as an affine subspace of four input-output parameters. Unlike the methods in \cite{youla1976modern,sabuau2014youla,rotkowitz2006characterization,mori2004elementary,sabau2017convex}, our input-output parametrization does not depend on a doubly-coprime factorization of the system or an initial stabilizing controller. Second, we  prove the  equivalence between our input-output parametrization and the classical Youla parametrization \cite{youla1976modern}. In particular, we derive the relationships between the proposed parameters and the Youla one in terms of any given doubly-coprime factorization,  highlighting the reason why computing a doubly-coprime factorization can be bypassed.  Third, we show that subspace constraints that are QI can be expressed in a convex way within the proposed input-output parametrization.  Last, we apply the proposed parametrization to the problem of computing $\mathcal{H}_2$ norm-optimal distributed controllers in the discrete-time and continuous-time domains.

The paper is structured as follows. Section~\ref{se:problem_statement} introduces the problem statement, and Section~\ref{se:Main_Results} presents our main theoretical findings. Numerical examples are used to illustrate our approach in Section~\ref{se:implementation}. We conclude the paper in Section~\ref{se:conclusion}.

\emph{Notation:}      		We denote the imaginary axis as	
    $
    j\mathbb{R}:=\{z \in \mathbb{C} \mid \Re(z)=0\}\,,
    $
     and  the unit circle as 
     $e^{j\mathbb{R}}:=\{z \in \mathbb{C}\mid \Re(z)^2+\Im(z)^2=1\}\,.$	
	We consider continuous-time and discrete-time transfer functions, defined as rational functions $g_c:j\mathbb{R} \rightarrow \mathbb{C}$ and $g_d:e^{j\mathbb{R}} \rightarrow \mathbb{C}$ respectively. A transfer function is called {\emph{proper} (resp. \emph{strictly-proper})} if the degree of the numerator polynomial does not exceed {(resp. is strictly lower than)} the degree of the denominator polynomial. Upon denoting $s=j\omega$ and $z=e^{j \omega}$, we define the \emph{poles} of $g_c$ and $g_d$ as the roots of the denominator polynomials of $g_c$ and $g_d$.  Similar to~\cite{rotkowitz2006characterization}, we denote by $\mathcal{R}_p^{m \times n}$ the set of $m \times n$ proper \emph{transfer matrices}, that is the set of $m \times n$ matrices whose entries are  proper transfer functions. Also, we denote by $\mathcal{R}_{sp}^{m \times n}$ the set of $m \times n$ strictly proper transfer matrices. Finally, we let $\mathcal{RH}_\infty^{m \times n}$ be the set of $m \times n$ proper \emph{stable} transfer matrices. For continuous-time systems we have
	\begin{equation*}
	\mathcal{R}\mathcal{H}_\infty^{m \times n}:=\{G \in \mathcal{R}_p^{m\times n}|~G\text{ has no poles in $\mathbb{C}_+$} \}\,,
	\end{equation*}
	while for discrete-time systems we have
		\begin{equation*}
	\mathcal{R}\mathcal{H}_\infty^{m \times n}:=\{G \in \mathcal{R}_p^{m\times n}|~G\text{ has no poles in $\mathbb{C}_{||\cdot||_2\geq 1}\}$ }\,,
	\end{equation*}
	where $\mathbb{C}_+=\{z \in \mathbb{C}|~\Re(z)\geq 0\}$ and $\mathbb{C}_{||\cdot||_2\geq 1}=\{z \in \mathbb{C}|~\Re(z)^2+\Im(z)^2 \geq 1\}$. 
	\section{Problem Statement}
	\label{se:problem_statement}
	We consider the feedback system with a block-structure shown in Figure~\ref{fig:feedback}, where $G$ and $K$ represent a system and a feedback controller, respectively. 
		\begin{figure}[t]
	      \centering
	      	\includegraphics[width=0.4\textwidth]{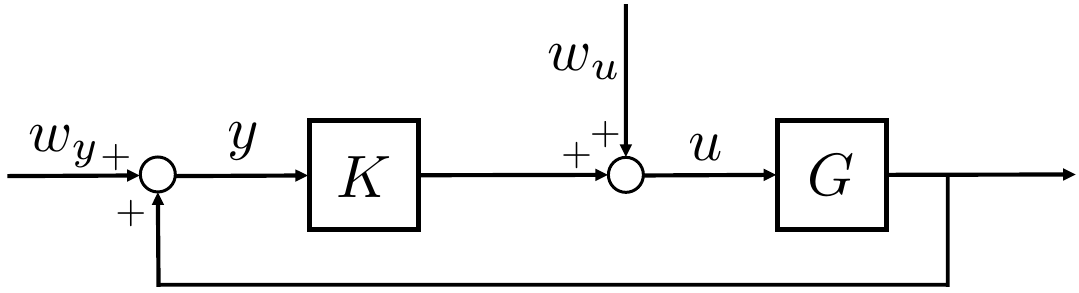}
	      	\caption{ \small Standard feedback interconnection of system and controller.}
	      		\label{fig:feedback}
	\end{figure}
	For causality and well-posedness of the problem, we assume that the system $G \in \mathcal{R}_{sp}^{p \times m}$ is strictly proper and that $K \in \mathcal{R}_{p}^{m \times p}$ is proper. Such assumptions guarantee that the inverses of $(I-GK)$ and $(I-KG)$ exist \cite{rotkowitz2006characterization}, thus ensuring well-posedness and causality.

	The linear system in Figure~\ref{fig:feedback} is equivalently described by the following equations
\begin{equation}
\begin{cases}
y=Gu+w_y \,,\\
u=Ky+w_u\,, \\
\end{cases} \label{eq:LTI}
\end{equation}			
	where $y$ is the output signal vector of dimension $p\in \mathbb{N}$, $u$ is the input signal vector of dimension $m \in \mathbb{N}$ and $w_y,w_u$ represent external disturbances of dimensions $p$ and $m$ respectively. By reorganizing (\ref{eq:LTI}) we obtain the closed-loop equations
	\begin{equation}
		\label{eq:closed_loop}
\begin{bmatrix}
y\\u
\end{bmatrix}=\begin{bmatrix}
(I-GK)^{-1}&(I-GK)^{-1}G\\
K(I-GK)^{-1}&(I-KG)^{-1}
\end{bmatrix}\begin{bmatrix}
w_y\\w_u
\end{bmatrix}\,.
	\end{equation}
	 The controller $K$ is said to be \emph{internally stabilizing} for the system $G$ if and only if the four transfer matrices in (\ref{eq:closed_loop}) are all stable \cite{francis1987course}. We thus define the set of internally  stabilizing controllers as follows
	\begin{equation*}
	\mathcal{C}_{\text{stab}}=\{K \in \mathcal{R}_p^{m \times p}|~K \text{ internally stabilizes } G\}\,.
	\end{equation*}
It is known that the set $\mathcal{C}_{\text{stab}}$ is non-convex in general. This can be easily verified, for instance, by selecting $K_1,K_2 \in \mathcal{C}_{\text{stab}}$ and noticing that $\frac{1}{2}(K_1+K_2)$ does not internally stabilize $G$ in general. Hence, directly computing a controller $K$ in $\mathcal{C}_{\text{stab}}$ using convex programming is not possible. The best-known method to obtain a convex parametrization of $\mathcal{C}_{\text{stab}}$ is the  Youla parametrization \cite{youla1976modern}, which relies on pre-computing a doubly-coprime factorization of $G$. The main goal of this paper is to present a new input-output parametrization for $\mathcal{C}_{\text{stab}}$ that is defined and implemented directly without  pre-computation steps.
	\section{Affine Parametrization of Stabilizing Controllers}
	\label{se:Main_Results}
	In this section we present our main results. First, we show that $\mathcal{C}_{\text{stab}}$ can be expressed as an affine subspace, without the need of computing a doubly-coprime parametrization of the system \cite{youla1976modern,sabuau2014youla} or a stabilizing controller \cite{sabau2017convex} in advance. Second, we derive an explicit connection with the classical Youla-parametrization and establish the reason why a doubly-coprime factorization is not necessary. Last, we show that our parametrization recovers the results of \cite{rotkowitz2006characterization,sabuau2014youla} on including  subspace constraints that are QI  in an exact and convex way.
	\subsection{An input-output convex parametrization of internally stabilizing controllers}
	The closed-loop equations (\ref{eq:closed_loop}) are equivalent to
\begin{equation}
	\label{eq:closed_loop_parameters}
\begin{bmatrix}
y\\u
\end{bmatrix}=\begin{bmatrix}
X&W\\
Y&Z
\end{bmatrix}\begin{bmatrix}
w_y\\w_u
\end{bmatrix}\,,
	\end{equation}
	where $(X,Y,W,Z$) are all functions of the system $G$ and the controller $K$ as per (\ref{eq:closed_loop}). 
	Our main idea is to treat the closed-loop transfer matrices $(X,Y,W,Z)$ in (\ref{eq:closed_loop_parameters}) directly as design parameters, and to  exploit their mutual relationships in terms of $G$ and $K$. We thus present our first theorem, whose proof is reported in the Appendix.
	\begin{theorem}
	\label{th:parametrization}
	Consider the LTI (\ref{eq:LTI}). The following statements hold.
	\begin{enumerate}
		 \item For any $K \in \mathcal{C}_{\text{stab}}$ there exist four corresponding transfer matrices $(X,Y,W,Z)$ that lie in the affine subspace defined by the equations
	 \begin{subequations}
	 \begin{align}
	 &\begin{bmatrix}
	 I&-G
	 \end{bmatrix}\begin{bmatrix}
	 X&W\\Y&Z
	 \end{bmatrix}=\begin{bmatrix}
	 I&0
	 \end{bmatrix}\,, \label{eq:aff1}\\
	 & \begin{bmatrix}
	 X&W\\Y&Z
	 \end{bmatrix}\begin{bmatrix}
	 -G\\I
	 \end{bmatrix}=\begin{bmatrix}
	 0\\I
	 \end{bmatrix}\label{eq:aff2}\,,\\
	 &\begin{matrix}
	 X\in \mathcal{RH}_\infty^{p \times p}\,,&Y\in \mathcal{RH}_\infty^{m \times p}\,,\\
	 ~W \in \mathcal{RH}_\infty^{p \times m}\,,&Z \in \mathcal{RH}_\infty^{m \times m}\,.
	 \end{matrix}\label{eq:aff3}
	 \end{align}
	 \end{subequations}
	 	 \item For any transfer matrices $(X,Y,W,Z)$ that lie in the affine subspace (\ref{eq:aff1})-(\ref{eq:aff3}), the controller $K=YX^{-1}$ belongs to $\mathcal{C}_{\text{stab}}$.
	\end{enumerate}
	\end{theorem}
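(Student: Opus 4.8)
The plan is to prove Theorem~\ref{th:parametrization} by exhibiting two maps that are mutual inverses: the map sending a controller to its four closed-loop transfer matrices, $K \mapsto (X,Y,W,Z)$ as defined block-wise in (\ref{eq:closed_loop}), and the map sending a parameter tuple to the controller $K = YX^{-1}$. Showing that these are well-defined and that they invert one another establishes both statements at once. For the forward inclusion I would take $K \in \mathcal{C}_{\text{stab}}$ and let $(X,Y,W,Z)$ be the four blocks of (\ref{eq:closed_loop}). Condition (\ref{eq:aff3}) is then immediate, since $K$ being internally stabilizing means exactly that these four blocks lie in $\mathcal{RH}_\infty$. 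The affine identities (\ref{eq:aff1})-(\ref{eq:aff2}) follow by direct algebra: for instance $X - GY = (I-GK)(I-GK)^{-1} = I$, while the off-diagonal relation $W = GZ$ is a consequence of the push-through identity $(I-GK)^{-1}G = G(I-KG)^{-1}$, with the two remaining equations handled analogously.

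For the converse inclusion, and hence for statement~2, I would start from an arbitrary tuple $(X,Y,W,Z)$ satisfying (\ref{eq:aff1})-(\ref{eq:aff3}) and set $K = YX^{-1}$. The first task is to argue that $X$ is invertible as a proper transfer matrix: reading $X = I + GY$ off (\ref{eq:aff1}) and using that $G$ is strictly proper while $Y$ is proper, the product $GY$ is strictly proper, so $X$ equals $I$ at infinity and is nonsingular; this makes $K$ well-defined and proper. Substituting $K = YX^{-1}$ and using $X - GY = I$ then gives $I - GK = X^{-1}$, whence $(I-GK)^{-1} = X$, and consequently $K(I-GK)^{-1} = Y$ and $(I-GK)^{-1}G = XG = W$ by (\ref{eq:aff2}).

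The only delicate computation is the input-sensitivity block, and I would settle it by verifying $(I-KG)Z = I$ directly: expanding to $Z - YX^{-1}GZ$ and using $GZ = W = XG$ (the two off-diagonal relations in (\ref{eq:aff1}) and (\ref{eq:aff2})) collapses $X^{-1}GZ$ to $G$, leaving $Z - YG = I$, which is again (\ref{eq:aff2}). Since $KG$ is strictly proper, $I - KG$ equals $I$ at infinity and is invertible, so $(I-KG)^{-1} = Z$. Thus the four closed-loop maps generated by $K$ coincide with $(X,Y,W,Z)$, which are stable by hypothesis, so $K \in \mathcal{C}_{\text{stab}}$; and composing the two maps returns the identity, since from a stabilizing $K$ one recovers $YX^{-1} = K(I-GK)^{-1}(I-GK) = K$. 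The main obstacle I anticipate is precisely this self-consistency step: one must confirm that the controller reconstructed from a parameter tuple reproduces that very tuple as its closed-loop maps, not merely some stabilizing behaviour, and this is where all four affine identities together with the push-through cancellations are needed and where care about the side on which inverses act is essential. The invertibility of $X$, $I-GK$, and $I-KG$, all resting on strict properness of $G$, is the other point that cannot be skipped, since otherwise the objects appearing in the statement are not even well-defined.
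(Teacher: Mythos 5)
Your proposal is correct and follows essentially the same route as the paper: map a stabilizing $K$ to its four closed-loop blocks and verify (\ref{eq:aff1})--(\ref{eq:aff3}) by direct algebra (including the push-through identity), then conversely set $K=YX^{-1}$ and show the four closed-loop maps reproduce $(X,Y,W,Z)$. Your treatment is in fact slightly more explicit than the paper's on two points---justifying invertibility of $X$ via $X=I+GY$ with $GY$ strictly proper, and verifying $(I-KG)Z=I$ directly rather than through the paper's identity $(I-KG)^{-1}=I+KG(I-KG)^{-1}$---but these are cosmetic variations on the same argument.
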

	
	Theorem~\ref{th:parametrization} leads to a novel input-output parametrization of all internally stabilizing controllers as an affine subspace.  
	
	
	\begin{remark}[Optimal controller synthesis]
 A common scenario (e.g. \cite{rotkowitz2006characterization,sabuau2014youla},\cite{ sabau2017convex,wang2019system})  involves  a disturbance $w$ of dimension $r \in \mathbb{N}$ such that $w_y=P_{yw}w$ for a transfer matrix $P_{yw} \in \mathcal{R}_p^{p \times r}$  and a  performance signal $z$ of dimension $q \in \mathbb{N}$ such that $z=P_{zu}u+P_{zw}w$ for  transfer matrices $P_{zu}\in \mathcal{R}_p^{q \times m}$, $P_{zw} \in \mathcal{R}_p^{q \times r}$. The goal is to minimize a given norm of the closed-loop transfer function from $w$ to $z$. This quantity can be encoded in terms of the parameter $Y$ as   $
\|P_{zw}+P_{zu}YP_{yw}\|\,,
$
 where $\|\cdot\|$ is any norm of interest. 
 Then, by Theorem~\ref{th:parametrization}, the optimal stabilizing controller in $\mathcal{C}_{\text{stab}}$  is found by solving the  convex program
	\begin{alignat}{3}
	   & \minimize_{X,Y,W,Z } && \|P_{zw}+P_{zu}YP_{yw}\| \nonumber\\
	   & \st&&~~~(\ref{eq:aff1})-(\ref{eq:aff3}) \nonumber\,.
	\end{alignat}	

	\end{remark}
	
	\subsection{Equivalence with  Youla: beyond doubly-coprime factorizations of the system}
	As discussed in Section~\ref{sec:intro}, other parametrizations of internally stabilizing controllers require preliminary knowledge of either a strongly stabilizing controller \cite{rotkowitz2006characterization}, a doubly-coprime factorization of the system \cite{youla1976modern,sabuau2014youla} or a stabilizing controller \cite{mori2004elementary,sabau2017convex}. Instead, Theorem~\ref{th:parametrization} establishes that $\mathcal{C}_{\text{stab}}$ can be parametrized as an affine subspace (\ref{eq:aff1})-(\ref{eq:aff3}) that depends on the transfer matrix $G$ directly. This result might surprise the reader familiar with the Youla parametrization and the  previous approaches \cite{youla1976modern,sabuau2014youla,mori2004elementary,sabau2017convex}. Here, we shed light on this desirable feature of our input-output  parametrization. First, we recall the notion of a doubly-coprime factorization of the system from \cite[Chapter 4]{francis1987course}:

\begin{lemma}
\label{le:doubly}
For any  $G \in \mathcal{R}_{sp}^{p \times m}$ there exist eight proper and stable transfer matrices satisfying the equations
\begin{align}
&G=N_rM_r^{-1}=M_l^{-1}N_l\,,\label{eq:dp1}\\
&\begin{bmatrix}
U_l&-V_l\\
-N_l&M_l
\end{bmatrix}\begin{bmatrix}
M_r&V_r\\
N_r&U_r
\end{bmatrix}=I_{m+p}\,.\label{eq:dp2}
\end{align}
\end{lemma}
Then, the Youla parametrization  of all internally stabilizing controllers \cite{youla1976modern} establishes the following equivalence:
\begin{equation*}
\mathcal{C}_{\text{stab}}=\{(V_r-M_rQ)(U_r-N_rQ)^{-1}|~Q \in \mathcal{RH}_\infty^{m \times p}\}\footnote{Equivalently, $\mathcal{C}_{\text{stab}}=\{(U_l-QN_l)^{-1}(V_l-QM_l)|~Q \in \mathcal{RH}_\infty^{m \times p}\}$.}\,.
\end{equation*}
In other words, for a given doubly-coprime factorization (\ref{eq:dp1})-(\ref{eq:dp2}), the non-convex set $\mathcal{C}_{\text{stab}}$ can be expressed in terms of the linear subspace of stable Youla transfer matrices $Q \in \mathcal{RH}_\infty^{m \times p}$. Next, we present our result on equivalence between the Youla parametrization and the input-output parametrization. Its proof is reported in the Appendix.
\begin{theorem}
\label{th:Youla_eq}
Let $U_r,V_r,U_l,V_l,M_r,M_l,N_r,N_l$ be any  doubly-coprime factorization of the system $G \in \mathcal{R}_{sp}^{p \times m}$. The following statements hold.
\begin{enumerate}
\item For any $Q \in \mathcal{RH}_\infty^{m \times p}$, the following transfer matrices
\begin{subequations}
\begin{align}
&X=(U_r-N_rQ)M_l\,, \label{eq:Q_to_X_1}\\
&Y=(V_r-M_rQ)M_l\,,\\
&W=(U_r-N_rQ)N_l\,,\\
&Z=I+(V_r-M_rQ)N_l\,,\label{eq:Q_to_X_2}
\end{align}
\end{subequations}
belong to the affine subspace (\ref{eq:aff1})-(\ref{eq:aff3}) and are such that $YX^{-1}=(V_r-M_rQ)(U_r-N_rQ)^{-1}$.
\item For any $(X,Y,W,Z)$ in the affine subspace  (\ref{eq:aff1})-(\ref{eq:aff3}), the transfer matrix
\begin{equation}
\label{eq:Youla_with_XYWZ}
Q=V_lXU_r-U_lYU_r-V_lWV_r+U_lZV_r-V_lU_r\,,
\end{equation}
is such that $Q \in \mathcal{RH}_\infty^{m \times p}$ and $(V_r-M_rQ)(U_r-N_rQ)^{-1}=YX^{-1}$.
\end{enumerate}
\end{theorem}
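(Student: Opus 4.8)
The plan is to establish both statements by direct algebraic manipulation, the only inputs being the scalar identities that the doubly-coprime relation (\ref{eq:dp2}) yields when read in both multiplication orders, together with the factorizations $M_l G = N_l$ and $G M_r = N_r$ coming from $G = M_l^{-1}N_l = N_r M_r^{-1}$.

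For the first statement, stability (\ref{eq:aff3}) is immediate since each of $X,Y,W,Z$ is a finite sum of products of transfer matrices in $\mathcal{RH}_\infty$. To verify (\ref{eq:aff1}) I would left-multiply the two block equations $X - GY = I$ and $W - GZ = 0$ by $M_l$, use $M_l G = N_l$ to clear $G$, and collapse the outcome with the bottom-row blocks of (\ref{eq:dp2}), namely $M_l U_r - N_l V_r = I$ and $M_l N_r - N_l M_r = 0$; invertibility of $M_l$ then returns (\ref{eq:aff1}). Equation (\ref{eq:aff2}) is shorter still: $XG = (U_r - N_r Q)M_l G = (U_r - N_r Q) N_l = W$, and likewise $Z - YG = I$. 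The controller identity follows by factoring out $M_l$, $YX^{-1} = (V_r - M_r Q) M_l M_l^{-1}(U_r - N_r Q)^{-1} = (V_r - M_r Q)(U_r - N_r Q)^{-1}$, where invertibility of $U_r - N_r Q$ is the usual well-posedness of the Youla controller and $M_l$ is invertible by construction.

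For the second statement, stability of $Q$ in (\ref{eq:Youla_with_XYWZ}) is again clear. The crux is the controller identity, and I would prove it by showing that the formula inverts the map of the first statement, i.e. that $(U_r - N_r Q)M_l = X$ and $(V_r - M_r Q)M_l = Y$; the claim then follows exactly as above. The key intermediate computation -- the linchpin of the argument -- is to evaluate $Q M_l$: substituting the reverse-order blocks $U_r M_l = I + N_r V_l$ and $V_r M_l = M_r V_l$ of (\ref{eq:dp2}) into (\ref{eq:Youla_with_XYWZ}), and using the factored forms of the affine constraints $W M_r = X N_r$ and $Z M_r = M_r + Y N_r$ (obtained from (\ref{eq:aff2}) after clearing $G$ via $G M_r = N_r$), all cross terms cancel through $U_l M_r - V_l N_r = I$ and one is left with the clean identity $Q M_l = V_l X - U_l Y$. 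Feeding this into $(U_r - N_r Q)M_l = U_r M_l - N_r(Q M_l)$ and $(V_r - M_r Q)M_l = V_r M_l - M_r(Q M_l)$, and substituting $I - X = -GY = -N_r M_r^{-1}Y$ from (\ref{eq:aff1}), each expression collapses to $X$ and to $Y$ respectively after a single further use of $U_l M_r - V_l N_r = I$.

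The main obstacle I anticipate is purely bookkeeping. The relation (\ref{eq:dp2}) supplies eight distinct scalar identities (four per multiplication order), and each cancellation needs exactly the right one; confusing $U_l M_r - V_l N_r = I$ with $M_r U_l - V_r N_l = I$, or losing track of the block-dimension conventions ($X$ is $p\times p$ whereas $U_l$ is $m\times m$), is where slips would occur. Establishing $Q M_l = V_l X - U_l Y$ is the single step that genuinely uses the affine constraints, and getting that cancellation right is the heart of the proof; everything afterward is mechanical.
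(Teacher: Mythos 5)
Your proposal is correct, and its overall architecture coincides with the paper's proof: both parts are direct algebraic verifications using the doubly-coprime identities, and both prove part 2 by showing that the formula (\ref{eq:Youla_with_XYWZ}) inverts the map (\ref{eq:Q_to_X_1})--(\ref{eq:Q_to_X_2}), i.e.\ that $Y=(V_r-M_rQ)M_l$ and $X=(U_r-N_rQ)M_l$, after which the controller identity is immediate. The execution of part 2 differs in a minor but genuine way. The paper substitutes $X=I+GY$, $W=G+GYG$, $Z=I+YG$ into (\ref{eq:Youla_with_XYWZ}), factors, and uses $U_l-V_lG=M_r^{-1}$ and $U_r-GV_r=M_l^{-1}$ to collapse $Q$ to the clean form $Q=M_r^{-1}V_r-M_r^{-1}YM_l^{-1}$, from which $Y$ and then $X$ are read off. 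You instead right-multiply by $M_l$ and pivot on $QM_l=V_lX-U_lY$, derived from the \emph{reverse-order} Bezout blocks $U_rM_l=I+N_rV_l$, $V_rM_l=M_rV_l$ together with the cleared constraints $WM_r=XN_r$, $ZM_r=M_r+YN_r$; I checked this cancellation and the subsequent collapse to $X$ and $Y$, and both are valid. Two small points for a final write-up: first, Lemma~\ref{le:doubly} states the Bezout identity in one multiplication order only, so you should note that the reverse-order blocks follow because the two factors are square matrices over the field of rational functions, whence $AB=I$ implies $BA=I$; second, invertibility of $U_r-N_rQ$ (which you attribute to "usual well-posedness") can be justified in one line since $M_l(U_r-N_rQ)=I+N_l(V_r-M_rQ)$ with $N_l$ strictly proper, matching the argument the paper uses for invertibility of $X$ in Theorem~\ref{th:parametrization}. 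Your route buys slightly more symmetric bookkeeping (staying with stable factors rather than rational inverses for most of the computation), while the paper's buys a shorter derivation with fewer coprime identities invoked.
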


Theorem~\ref{th:Youla_eq} establishes the equivalence of the proposed parametrization with the Youla parametrization for any existing doubly-coprime factorization of $G$. The relationships (\ref{eq:Q_to_X_1})-(\ref{eq:Q_to_X_2}) and (\ref{eq:Youla_with_XYWZ}) offer an important insight: they encapsulate that stability of $(X,Y,W,Z)$ in the affine subspace (\ref{eq:aff1})-(\ref{eq:aff3}) is equivalent to stability of the Youla parameter corresponding to \emph{any} doubly-coprime factorization of the system. 

Also notice that (\ref{eq:aff1})-(\ref{eq:aff3}) are equivalent to considering one parameter $Y$ and enforcing  that $Y$, $I+GY$, $I+YG$ and $G+GYG$ are all stable transfer matrices. Then, in accordance with Theorem~\ref{th:parametrization}, our method is to parametrize $\mathcal{C}_{\text{stab}}$ by explicitly imposing stability of four affine functions of  $Y$ corresponding to the four closed-loop transfer matrices. On the other hand, the Youla parametrization requires stability of only one $m \times p$ parameter by exploiting a pre-computed doubly-coprime factorization that maps any stable $Q\in \mathcal{RH}_\infty^{m \times p}$ back to $\mathcal{C}_{\text{stab}}$. In conclusion, we have shown that the mathematical construct of doubly-coprime factorizations can be bypassed by instead directly adding affine constraints in the form (\ref{eq:aff1})-(\ref{eq:aff3}) within our suggested parametrization. 

Finally, we remark that in this paper we assumed that $G$ is rational and finite-dimensional for simplicity. This  ensures that a doubly-coprime factorization of $G$ exists by Lemma~\ref{le:doubly} and can be found with standard methods \cite{nett1984connection}. For a general plant $G$,  a doubly-coprime factorization might not exist \cite{anantharam1985stabilization} or be challenging to find \cite{laakkonen2016robust,foias1996robust}. Instead, our input-output parametrization  remains defined directly in terms of $G$. 

	\subsection{Quadratically invariant constraints on the stabilizing controllers}
	It was shown that Youla-like parametrizations \cite{sabuau2014youla,rotkowitz2006characterization}  can be used to encode, in a convex way, subspace constraints on $K$ that are \emph{quadratically invariant} (QI) with respect to the system. The coordinate free approach \cite{mori2004elementary} instead allows for a slightly less general result, in that \emph{strongly quadratically invariant} (SQI) constraints \cite{sabau2017convex} can be encoded in a convex way.  Here, we show that our input-output parametrization  allows for a straightforward inclusion of subspace constraints that are QI in a convex way. 
	
	We begin by reviewing the notion of QI. A set $\mathcal{K} \subseteq \mathcal{R}_p^{m \times p}$ is said to be \emph{Quadratically Invariant (QI)} with respect to the system $G$ if and only if
    \begin{equation*}
    KGK \in \mathcal{K}, \quad \forall K \in \mathcal{K}\,. 
    \end{equation*} 
 Motivated by \cite{rotkowitz2006characterization,sabuau2014youla}, we define the closed-loop transformation $h_G:\mathcal{R}_p^{m \times p} \rightarrow \mathcal{R}_p^{m \times p}$ of $G$ with $K$ :
\begin{equation*}
h_G(K):=-K(I-GK)^{-1}\,.
\end{equation*}
  We then recall the following main result from \cite[Theorem~14]{rotkowitz2006characterization} and \cite[Theorem~9]{lessard2010algebraic}.
	\begin{lemma}
	\label{le:QI}
	Let $\mathcal{K} \subseteq \mathcal{R}_p^{m \times p}$ be a subspace. Then
\begin{equation*}
\mathcal{K}\text{ is QI w.r.t. }G \iff h_G(\mathcal{K})=\mathcal{K}\,.
\end{equation*}	
	\end{lemma}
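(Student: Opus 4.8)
The plan is to exploit the fact that $h_G$ is an involution and to expand it as a Neumann series, reducing the subspace-invariance condition $h_G(\mathcal{K})=\mathcal{K}$ to a statement about the individual terms of the series. First I would verify the push-through identity $h_G(K)=-K(I-GK)^{-1}=-(I-KG)^{-1}K$ and then compute $h_G(h_G(K))$. A direct calculation, using that $I-G\,h_G(K)=(I-GK)^{-1}$, shows that $h_G(h_G(K))=K$, so $h_G$ is an involution and hence a bijection on $\mathcal{R}_p^{m\times p}$. An immediate and useful consequence is that $h_G(\mathcal{K})\subseteq\mathcal{K}$ already forces $h_G(\mathcal{K})=\mathcal{K}$: applying $h_G$ to the inclusion and using $h_G\circ h_G=\mathrm{id}$ yields the reverse inclusion $\mathcal{K}\subseteq h_G(\mathcal{K})$. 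This lets me prove only a single inclusion in each direction.

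Next I would record the Neumann expansion
\[
h_G(K)=-K(I-GK)^{-1}=-\sum_{n=0}^{\infty}(KG)^nK\,,
\]
valid as a rational identity. For the direction $\mathcal{K}\text{ QI}\Rightarrow h_G(\mathcal{K})=\mathcal{K}$, I would show that every term $(KG)^nK$ lies in $\mathcal{K}$, by induction on $n$. Since $\mathcal{K}$ is a subspace, polarizing the quadratic map $K\mapsto KGK$ gives $K_1GK_2+K_2GK_1\in\mathcal{K}$ for all $K_1,K_2\in\mathcal{K}$ (expand $(K_1+K_2)G(K_1+K_2)\in\mathcal{K}$ and cancel the diagonal terms). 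Taking $K_1=K$ and $K_2=(KG)^nK$, which lies in $\mathcal{K}$ by the inductive hypothesis, produces $2(KG)^{n+1}K\in\mathcal{K}$, closing the induction. Summing the series and invoking closedness of $\mathcal{K}$ then gives $h_G(K)\in\mathcal{K}$, hence $h_G(\mathcal{K})\subseteq\mathcal{K}$, which upgrades to equality by the involution property.

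For the converse $h_G(\mathcal{K})=\mathcal{K}\Rightarrow\mathcal{K}\text{ QI}$, I would use a scaling argument. For $K\in\mathcal{K}$ and any scalar $a$ we have $aK\in\mathcal{K}$, and therefore $h_G(aK)+aK\in\mathcal{K}$; from the Neumann series this element equals $-a^2\,KGK(I-aGK)^{-1}$. Dividing by the scalar $-a^2$ (permissible because $\mathcal{K}$ is a subspace) leaves $KGK(I-aGK)^{-1}\in\mathcal{K}$, and letting $a\to 0$ yields $KGK\in\mathcal{K}$, which is exactly QI.

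The main obstacle I anticipate is analytic rather than algebraic: both directions pass an infinite sum or a limit through $\mathcal{K}$ (summing the Neumann series, and taking $a\to 0$), which requires $\mathcal{K}$ to be a closed subspace in a suitable topology on transfer matrices. For the structural and sparsity constraints of interest this closedness holds, but I would state the hypothesis explicitly rather than leave it implicit. The secondary technical point is the polarization-plus-induction bookkeeping that propagates QI from the single quadratic term $KGK$ to all higher-order terms $(KG)^nK$; this is the algebraic crux of the forward direction and is what makes the Neumann series stay inside $\mathcal{K}$.
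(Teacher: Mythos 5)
A preliminary remark: the paper does not prove Lemma~\ref{le:QI} at all --- it is recalled as a known result, citing \cite[Theorem~14]{rotkowitz2006characterization} and \cite[Theorem~9]{lessard2010algebraic} --- so your attempt has to be measured against those proofs rather than against anything in this paper. Your algebraic ingredients are correct and are indeed the standard ones: $h_G$ is an involution (hence a single inclusion suffices), polarization plus induction shows that QI forces $(KG)^nK \in \mathcal{K}$ for every $n$, and the identity $h_G(aK)+aK=-a^2\,KGK(I-aGK)^{-1}$ is the usual route to the converse.

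The genuine gap is the Neumann-series step. The identity $-K(I-GK)^{-1}=-\sum_{n=0}^{\infty}(KG)^nK$ is \emph{not} ``valid as a rational identity,'' and the problem is not merely that $\mathcal{K}$ must be closed: the series typically fails to converge to $h_G(K)$ in any norm topology on transfer matrices. For instance, with $G=1/s$ and $K=2$, the terms $2^{n+1}s^{-n}$ diverge at every point of the imaginary axis with $|s|<2$, while $h_G(K)=-2s/(s-2)$ is perfectly well defined there; no closedness hypothesis can make a divergent series sum to the element you need. (The expansion is valid coefficient-wise, as a formal series in $s^{-1}$, but then ``closed'' must mean closed under formal limits --- a property that arbitrary subspaces, which is all the lemma assumes, do not have.) This is precisely the difficulty the cited proofs exist to overcome: Rotkowitz--Lall introduce the homotopy $\alpha\mapsto K(I-\alpha GK)^{-1}$, establish membership in $\mathcal{K}$ for small $\alpha$ (where the series genuinely converges), and then propagate membership to $\alpha=1$ by re-expanding the resolvent around each $\alpha_0$ at which membership is already known --- a step that applies QI to $K(I-\alpha_0GK)^{-1}\in\mathcal{K}$ itself --- combined with an open/closed connectedness argument; Lessard--Lall instead remove topology altogether with an algebraic argument tailored to rational matrices. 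Your proposal skips this propagation mechanism entirely, and it is the actual content of the theorem; the same objection applies, in milder form, to the unjustified limit $a\to 0$ in your converse direction.
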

	
We are now ready to present our  result about including subspace constraints that are QI with respect to $G$. 
\begin{theorem}
\label{th:QI}
Let $\mathcal{K} \subseteq \mathcal{R}_p^{m \times p}$ be a subspace that is QI  with respect to $G$ . Then, any controller  $K \in \mathcal{C}_{\text{stab}} \cap \mathcal{K}$ is represented by the affine subspace
\begin{equation}
\{(X,Y,W,Z)|~(\ref{eq:aff1})-(\ref{eq:aff3}),~Y \in \mathcal{K}\}\,,\label{eq:affine_subspace_sparse}
\end{equation}
and every $(X,Y,W,Z)$ in (\ref{eq:affine_subspace_sparse}) corresponds to a controller $K \in \mathcal{C}_{\text{stab}} \cap \mathcal{K}$.
\end{theorem}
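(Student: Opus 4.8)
The plan is to show that, under the correspondence of Theorem~\ref{th:parametrization}, the convex constraint $Y \in \mathcal{K}$ is exactly the image of the nonconvex controller constraint $K \in \mathcal{K}$, using the quadratic invariance characterization of Lemma~\ref{le:QI}. The starting point is the observation that the parameter $Y$ coincides, up to sign, with the closed-loop transformation $h_G$. Indeed, comparing (\ref{eq:closed_loop}) with (\ref{eq:closed_loop_parameters}) gives $Y = K(I-GK)^{-1}$, so that $Y = -h_G(K)$. Since $\mathcal{K}$ is a subspace and hence closed under negation, the membership $Y \in \mathcal{K}$ is equivalent to $h_G(K) \in \mathcal{K}$.

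For the first inclusion, I would take any $K \in \mathcal{C}_{\text{stab}} \cap \mathcal{K}$ and let $(X,Y,W,Z)$ be the associated closed-loop maps in (\ref{eq:closed_loop_parameters}); by Theorem~\ref{th:parametrization} these satisfy (\ref{eq:aff1})-(\ref{eq:aff3}). Because $K \in \mathcal{K}$ and $\mathcal{K}$ is QI with respect to $G$, Lemma~\ref{le:QI} yields $h_G(K) \in h_G(\mathcal{K}) = \mathcal{K}$, whence $Y = -h_G(K) \in \mathcal{K}$. Thus $K$ is represented within the affine subspace (\ref{eq:affine_subspace_sparse}).

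For the converse, I would take any $(X,Y,W,Z)$ satisfying (\ref{eq:aff1})-(\ref{eq:aff3}) with $Y \in \mathcal{K}$, and set $K = YX^{-1}$, which lies in $\mathcal{C}_{\text{stab}}$ by Theorem~\ref{th:parametrization}. A short computation using $X = I + GY$ shows $I - GK = X^{-1}$, hence $K(I-GK)^{-1} = Y$, so again $Y = -h_G(K)$ and therefore $h_G(K) = -Y \in \mathcal{K}$. To recover $K \in \mathcal{K}$ from $h_G(K) \in \mathcal{K}$, the key fact I would establish is that $h_G$ is an involution on $\mathcal{R}_p^{m \times p}$: a direct substitution shows $I - Gh_G(K) = (I-GK)^{-1}$, and hence $h_G(h_G(K)) = K$. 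Being an involution, $h_G$ is a bijection, so the set equality $h_G(\mathcal{K}) = \mathcal{K}$ from Lemma~\ref{le:QI} also gives $h_G^{-1}(\mathcal{K}) = \mathcal{K}$; applying this to $h_G(K) \in \mathcal{K}$ yields $K \in \mathcal{K}$, and therefore $K \in \mathcal{C}_{\text{stab}} \cap \mathcal{K}$.

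The main obstacle, and really the crux of the argument, is the identification $Y = -h_G(K)$ together with the involution property of $h_G$; once these are in place the QI characterization of Lemma~\ref{le:QI} converts the controller constraint $K \in \mathcal{K}$ into the parameter constraint $Y \in \mathcal{K}$ in both directions. The remaining verifications ($X = I + GY$ and the bijectivity bookkeeping) are routine algebraic manipulations with the same resolvent identities used in the proof of Theorem~\ref{th:parametrization}.
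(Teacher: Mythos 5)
Your proof is correct and follows essentially the same approach as the paper: both directions hinge on Theorem~\ref{th:parametrization}, the identification $Y=-h_G(K)$, and the set equality $h_G(\mathcal{K})=\mathcal{K}$ from Lemma~\ref{le:QI}. The only difference is in the converse, where the paper directly observes $K=Y(I+GY)^{-1}=h_G(-Y)\in h_G(\mathcal{K})=\mathcal{K}$, while you reach the same conclusion by first showing $h_G(K)=-Y\in\mathcal{K}$ and then proving and invoking the involution property $h_G(h_G(K))=K$ --- a correct but slightly longer route that makes explicit a fact the paper's choice of identity lets it avoid.
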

\begin{proof}
Let $K \in \mathcal{C}_{\text{stab}} \cap \mathcal{K}$ and choose $X=(I-GK)^{-1}$, $Y=K(I-GK)^{-1}$, $W=(I-GK)^{-1}G$ and $Z=(I-KG)^{-1}$. By Theorem~\ref{th:parametrization}, we have that $(X,Y,W,Z)$ satisfies (\ref{eq:aff1})-(\ref{eq:aff3}). Notice that $Y=-h_G(K)$ by definition. By Lemma~\ref{le:QI}, since $K \in\mathcal{K}$ and $\mathcal{K}$ is a  subspace that is QI with respect to $G$, we have that $h_G(K) \in \mathcal{K}$. Hence, $Y \in \mathcal{K}$ and $(X,Y,W,Z)$ belongs to the set (\ref{eq:affine_subspace_sparse}).

 Vice-versa, let $(X,Y,W,Z)$ lie in the set (\ref{eq:affine_subspace_sparse}) and choose $K=YX^{-1}$. By Theorem~\ref{th:parametrization}, we have that $K \in \mathcal{C}_{\text{stab}}$. Notice that by using (\ref{eq:aff1}) we have $K=Y(I+GY)^{-1}=h_G(-Y)$.  By Lemma~\ref{le:QI}, since $-Y \in \mathcal{K}$ and $\mathcal{K}$ is a subspace that is QI with respect to $G$, we have that $h_G(-Y) \in \mathcal{K}$. Hence, $K \in \mathcal{K}$ and  $K \in \mathcal{C}_{\text{stab}} \cap \mathcal{K}$ as desired.
\end{proof}


Theorem~\ref{th:QI} shows that if $\mathcal{K}$ is a subspace that is QI with respect to $G$, simply adding the requirement $Y \in \mathcal{K}$ to the constraints (\ref{eq:aff1})-(\ref{eq:aff3}) allows for parametrizing all the internally stabilizing controllers $K \in \mathcal{C}_{\text{stab}} \cap \mathcal{K}$.

\section{Implementing The Input-Output Parametrization}
\label{se:implementation}
Here, we first discuss how controllers in the affine subspace (\ref{eq:aff1})-(\ref{eq:aff3}) can be obtained in practice by solving a linear program (LP). Then, we investigate application of the input-output parametrization to efficient computation of norm-optimal distributed controllers.  
\subsection{Linear programming for stabilizing controllers}
\label{sub:LP}
Despite being affine, the subspace (\ref{eq:aff1})-(\ref{eq:aff3}) is infinite-dimensional because the order of the polynomials in the entries of $(X,Y,W,Z)$ is not fixed. We use the  results of \cite{linnemann1999convergent,pohl2009advanced} to obtain finite-dimensional approximations in continuous- and discrete-time that can be solved efficiently.

Consider the sets  
\begin{align}
&\left\{(s+a)^{-k}\right\}_{k=0}^N\,,\label{eq:infinite_basis_continuous}\\
&\left\{z^{-k}\right\}_{k=0}^N\,,\label{eq:infinite_basis_discrete}
\end{align}
 where $N \in \mathbb{N}$ and $a>0$ is any real number. By \cite{linnemann1999convergent}, for every $g' \in \mathcal{RH}_\infty$  in continuous-time there exists $g$ in the subspace spanned by (\ref{eq:infinite_basis_continuous}) with $N \rightarrow \infty$ such that $\|g \hspace{-0.1cm}- \hspace{-0.1cm}g'\| \hspace{-0.1cm}< \hspace{-0.1cm}\epsilon$ for every $\epsilon>0$, where $\| \cdot  \|$ can be, for instance, the $\mathcal{H}_2$, $\mathcal{H}_\infty$ or $\mathcal{L}_1$ norm. Hence, optimizing over the subspace spanned by  (\ref{eq:infinite_basis_continuous}) for $N \rightarrow \infty$ yields the same results as optimizing over $\mathcal{RH}_\infty$. For the discrete-time case, it is known that (\ref{eq:infinite_basis_discrete}) spans the entire $\mathcal{R}_p$ for $N\rightarrow \infty$ \cite[Theorem 4.7]{pohl2009advanced}. According to these results, a  finite-dimensional approximation of the $(X,Y,W,Z)$'s satisfying (\ref{eq:aff3}) is obtained by requiring that  they are expressed as
 \begin{equation}
 \label{eq:FIR_constraint}
\qquad \quad \begin{matrix}X=\sum_{i=0}^NX[i]\sigma^{-i}\,, &Y=\sum_{i=0}^NY[i]\sigma^{-i}\,, \\
W=\sum_{i=0}^NW[i]\sigma^{-i}\,,& Z=\sum_{i=0}^NZ[i]\sigma^{-i}\,, \end{matrix}
\end{equation}
 where the real matrices $(X[i]$, $Y[i]$, $W[i]$, $Z[i])$ for all $i=0,\ldots,N$ are the decision variables,  and we pose $\sigma\hspace{-0.05cm}=\hspace{-0.05cm}(s+a)$ in continuous-time and $\sigma=z$ in discrete-time.  
The additional relationships (\ref{eq:aff1})-(\ref{eq:aff2}) result in a set of affine constraints on $(X[i],Y[i],W[i],Z[i])$ for all $i=0,\ldots,N$, which are obtained by setting the coefficients of the numerator polynomials appearing in each entry of the transfer matrices $X -I-GY$, $W-GZ$, $-XG+W$ and $-YG+Z-I$ to zero. We note that, in accordance with the results of \cite{wang2019system}, such affine constraints are feasible for values of $N$ that are large enough; pre-computing feasible values of $N$ and a formal analysis of the convergence of the minimum cost  to that of the full original problem as $N \hspace{-0.05cm}\rightarrow \hspace{-0.05cm}\infty$ are beyond the scope of this paper, but a relevant direction for future research. We conclude that transfer matrices in the affine subspace (\ref{eq:aff1})-(\ref{eq:aff3}) can be computed efficiently by implementing a corresponding LP based on the finite-dimensional assumption (\ref{eq:FIR_constraint}).

  \subsection{Application examples: norm-optimal distributed control}
  Let us  consider the following unstable system in discrete-time taken from \cite{Alavian}:
\begin{equation}
\label{eq:system}
G_d=
\begingroup
\renewcommand*{\arraystretch}{0.7}
\begin{bmatrix}
v(z)&0&0&0&0\\
v(z)&u(z)&0&0&0\\
v(z)&u(z)&v(z)&0&0\\
v(z)&u(z)&v(z)&v(z)&0\\
v(z)&u(z)&v(z)&v(z)&u(z)
\end{bmatrix}\,,
\endgroup
\end{equation}
where $v(z)=\frac{0.1}{z-0.5}$ and $u(z)=\frac{1}{z-2}$. Let us also consider a continuous-time unstable system $G_c$, defined in the same way as per (\ref{eq:system}) with the substitutions $u(s)=\frac{1}{s-1}$ and $v(s)=\frac{1}{s+1}$ instead of $u(z)$ and $v(z)$, respectively (taken from \cite{rotkowitz2006characterization}). Our goal is to compute a distributed stabilizing controller $K$ that minimizes a cost function to be defined and complies with a desired \emph{sparsity pattern}, that is, some specific entries of  the transfer matrix $K$ must be 0 to encode the fact that certain scalar control inputs cannot measure certain outputs. Formally, we require that $K \in \mathcal{C}_{\text{stab}} \cap \text{Sparse}(S)$, where $S\in \{0,1\}^{m \times p}$ is a given binary matrix and 
\begin{equation*}
\text{Sparse}(S)=\{K \in \mathcal{R}_p^{m \times p}|~K_{ij}=0 \text{ if } S_{ij}=0\}\,,
\end{equation*}
is a subspace. Here, we consider the sparsity pattern
\begin{equation*}
S=
\begingroup
\renewcommand*{\arraystretch}{0.8}
\begin{bmatrix}
1&0&0&0&0\\
1&1&0&0&0\\
1&1&1&0&0\\
1&1&1&1&0\\
1&1&1&1&1
\end{bmatrix}\,,
\endgroup
\end{equation*}
which is also considered in \cite{rotkowitz2006characterization,Alavian}. It is easy to verify that $\text{Sparse}(S)$ is QI with respect to $G_d$ and $G_c$  \cite[Theorem~26]{rotkowitz2006characterization}. 

\emph{Cost function:} We consider the cost function of \cite{rotkowitz2006characterization,Alavian}, chosen as $\|P_{zw}+P_{zu}K(I-GK)^{-1}P_{yw}\|_{\mathcal{H}_2}$, where
\begin{equation*}
P_{zw}=\begin{bmatrix}
G&0\\0&0
\end{bmatrix}\,, \quad P_{zu}=\begin{bmatrix}
G\\I
\end{bmatrix}\,, \quad P_{yw}=\begin{bmatrix}
G&I
\end{bmatrix}\,,
\end{equation*}
and ``$G$'' stands for either $G_c$ or $G_d$. The meaning of this cost function was explained in Remark~1, Section~\ref{se:Main_Results}. Using (\ref{eq:closed_loop}), (\ref{eq:closed_loop_parameters}) the cost function is equivalent to
\begin{equation}
\label{eq:cost_sensitivity}
\norm{ \begin{bmatrix}
W&X-I\\
Z-I&Y
\end{bmatrix}}_{\mathcal{H}_2}\,,
\end{equation}
which is convex in $(X,Y,W,Z)$. By Theorem~\ref{th:parametrization} and Theorem~\ref{th:QI}, the optimal control problem under investigation is reformulated as the following convex program:
	\begin{alignat}{3}
	   & \minimize_{X,Y,W,Z } &&~~~(\ref{eq:cost_sensitivity}) \label{probl:tosolve}\\
	   & \text{subject to} &&~~~(\ref{eq:aff1})-(\ref{eq:aff3}),~~Y \in \text{Sparse}(S) \nonumber\,.
	\end{alignat}	
	The program above is infinite-dimensional. Next, we exploit the finite-dimensional approximation (\ref{eq:FIR_constraint}) for both the discrete-time and continuous-time cases. All the resulting optimization problems were solved with MOSEK \cite{mosek}, called through MATLAB via YALMIP \cite{YALMIP}, on a computer with a 16GB RAM and a 4.2 GHz quad-core Intel i7 processor.

\subsubsection{Discrete-time case}

By (\ref{eq:FIR_constraint}) and  the definition of the $\mathcal{H}_2$ norm in discrete-time, (\ref{eq:cost_sensitivity}) can be written as
\begin{equation}
\sum_{i=0}^N\text{Trace}(J[i]^\mathsf{T}J[i])\,, \label{eq:trace}
\end{equation}
where
\begin{equation*}
J[0]=\begin{bmatrix}
W[0]&X[0]-I\\Z[0]-I&Y[0]
\end{bmatrix}\,,~J[i]=\begin{bmatrix}
W[i]&X[i]\\Z[i]&Y[i]
\end{bmatrix}\,,
\end{equation*}
for each $i=1,\ldots,N$. The cost function (\ref{eq:trace}) is thus quadratic in $(X[i],Y[i],W[i],Z[i])$ for every $i=1,\ldots,N$. The affine constraints (\ref{eq:aff1})-(\ref{eq:aff3}) are implemented as outlined in Section~\ref{sub:LP}, while $Y \in \text{Sparse}(S)$ is enforced by setting $Y[i]_{jk}=0$ for every $i=1,\ldots,N$ and $j,k$ such that $S_{jk}=0$. Problem (\ref{probl:tosolve}) is thus reduced to a quadratic program (QP), efficiently solvable with off-the-shelf software.

\emph{Simulation: }We set the order in (\ref{eq:FIR_constraint}) to $N=10$, as higher values for $N$ brought negligible improvement on the minimum cost. First, we omitted the sparsity constraints and obtained a centralized closed-loop $\mathcal{H}_2$ norm of $5.67$. Next, we computed the optimal distributed controller $K \in \text{Sparse}(S)$ and obtained a closed-loop $\mathcal{H}_2$ norm of $6.73$. In both cases, the solver time was less than $1$ second.


\subsubsection{Continuous-time case}

 Since the system $G_c$ is defined in continuous-time, the cost function does not admit the form (\ref{eq:trace}). Encoding the $\mathcal{H}_2$ norm in continuous-time presents an additional challenge. 
  As outlined in Section~\ref{sub:LP}, our LP based computation  offers a  solution to the main difficulty in implementing the coordinate-free approach of \cite{sabau2017convex}, that is obtaining an initial controller $K_0 \in \mathcal{C}_{\text{stab}} \cap \mathcal{K}$. Once $K_0 \in \mathcal{C}_{\text{stab}} \cap \mathcal{K}$ is obtained by solving an LP within the input-output parametrization,  the convex model-matching formulation of \cite[Theorem IV.12]{sabau2017convex} can be exploited directly to optimize over.

 \emph{Simulation: }  We implemented the LP given by (\ref{eq:aff1})-(\ref{eq:aff3}) with $Y \in \text{Sparse}(S)$ and the finite-dimensional assumption (\ref{eq:FIR_constraint}), as described in Section~\ref{sub:LP}. At this stage we are only interested in computing any feasible solution, so we chose the smallest value $N=2$ that yields a feasible program. We also chose $a=3$. The LP was solved in $0.24$ seconds of solver time. Its solution gives the controller
\begin{equation*}
K_0=
\begingroup
\renewcommand*{\arraystretch}{0.8}
\frac{8}{s+7}\begin{bmatrix}
0&0&0&0&0\\
0&-2&0&0&0\\
0&0&0&0&0\\
0&1&0&0&0\\
0&\frac{2(s+5)(s+3)}{(s+1)(s+7)}&0&0&-2
\end{bmatrix}\,,
\endgroup
\end{equation*}
which can be verified to lie in $\mathcal{C}_{\text{stab}} \cap \text{Sparse}(S)$. Having computed an initial stabilizing controller $K_0$, the convex model-matching problem \cite[Theorem IV.12]{sabau2017convex} can  be cast. Since $K_0$ is itself stable, we solved this convex program through the  numerical technique of \cite{Alavian}, which is based on semi-definite programming. First, we omitted the sparsity constraints and obtained a centralized closed-loop $\mathcal{H}_2$ norm of $6.38$. Next, we computed the optimal distributed controller $K \in \text{Sparse}(S)$ and obtained a closed-loop $\mathcal{H}_2$ norm of $7.36$. Both results match  those of \cite[Figure 4]{rotkowitz2006characterization}, where the same system and sparsity patterns were considered. The solver time did not exceed $7$ seconds. 

 To summarize, the input-output parametrization allows for an LP-based computation of stabilizing controllers subject to QI subspace constraints by directly using the expression of the plant $G$. Furthermore, in discrete-time the $\mathcal{H}_2$ norm minimization problem can be cast as a QP in our suggested parameters. On the other hand, previous results and tools, e.g. \cite{sabau2017convex,Alavian}, can be exploited to cast and solve the same problem in continuous-time; since these techniques typically require knowledge of an initial stabilizing controller, our parametrization offers an LP-based solution to fill this gap in the design process. 

\section{Conclusion}
\label{se:conclusion}
We proposed an input-output parametrization of all internally stabilizing controllers subject to subspace constraints that are QI. A main advantage of the proposed parametrization is that it allows for bypassing potentially challenging pre-computation steps that were needed in previous approaches. This fact was shown by establishing the direct relationships (\ref{eq:Q_to_X_1})-(\ref{eq:Q_to_X_2}) and (\ref{eq:Youla_with_XYWZ})  of each of our parameters with the Youla parameter. Finally, the input-output parametrization clarifies that an internal-state point of view as per \cite{wang2019system} is not necessary to bypass pre-computations; instead, an input-output perspective in the frequency domain is sufficient. We provided numerical examples to illustrate applicability of the proposed parametrization. This work opens up the question of whether other useful parametrizations which generalize both the one we proposed and the system level parametrization \cite{wang2019system} can be established, in order to incorporate different classes of constraints and objectives in the control system design.  Furthermore, it would be relevant to address application of the proposed input-output  parametrization to the design of localized controllers \cite{wang2019system}. Last, we remark that the suggested parametrization could naturally address the case of non-rational and infinite-dimensional plants, as the relationships (\ref{eq:aff1})-(\ref{eq:aff3}) hold regardless of the domain of $G$. For such cases, the standard state-space representation used in the system level parametrization \cite{wang2019system} is not available and a doubly-coprime factorization might not exist \cite{anantharam1985stabilization} or be difficult to find \cite{laakkonen2016robust,foias1996robust}. However, practical implementation of the constraints (\ref{eq:aff1})-(\ref{eq:aff3}) when the plant  is non-rational and/or infinite-dimensional requires further study.



\section*{Appendix}
\subsection{Proof of Theorem~\ref{th:parametrization}}

1) Let $K$ be in the set $\mathcal{C}_{\text{stab}}$. By definition the four transfer matrices in (\ref{eq:closed_loop}) are stable. Let $X=(I-GK)^{-1}$,  $Y=K(I-GK)^{-1}$, $W=(I-GK)^{-1}G$, $Z=(I-KG)^{-1}$.  Then, (\ref{eq:aff3}) holds by hypothesis. We now verify (\ref{eq:aff1})-(\ref{eq:aff2}).  
	 \begin{align*}
	 X-GY&=(I-GK)^{-1}-GK(I-GK)^{-1}=I\,,\\
	 W-GZ&=(I-GK)^{-1}G-G(I-KG)^{-1}=0\,,\\
	 -XG+W&=-(I-GK)^{-1}G+(I-GK)^{-1}G=0\,,\\
	 -YG+Z&=-K(I-GK)^{-1}G+(I-KG)^{-1}=I\,.
	 \end{align*}
	 Hence, (\ref{eq:aff1})-(\ref{eq:aff2}) are satisfied. Any $K \in \mathcal{C}_{\text{stab}}$ is thus represented by (\ref{eq:aff1})-(\ref{eq:aff3}). 
	 
2) Let $(X,Y,W,Z)$ satisfy (\ref{eq:aff1})-(\ref{eq:aff3}). Let $K=YX^{-1}$. Observe that $X$ is proper, but not strictly proper by (\ref{eq:aff1}). Hence, $X^{-1}$ is proper. Since $Y$ is proper by (\ref{eq:aff3}), it follows that $K=YX^{-1}$ is proper.  By (\ref{eq:aff1})-(\ref{eq:aff2}) we have
	 \begin{align*}
	 (I-GK)^{-1}&=(I-GYX^{-1})\\
	 &=(I-GY(I+GY)^{-1})^{-1} \\
	 &=(I+GY)=X\,,\\ 
	 K(I-GK)^{-1}&=YX^{-1}X=Y\,, \\
	 	 (I-GK)^{-1}G&=XG=W\,,\\
	 (I-KG)^{-1}&=I+KG(I-KG)^{-1} \quad~~ \\
	 &=I+YG=Z\,.
	 \end{align*}
	 Since $(X,Y,W,Z)$ are stable by (\ref{eq:aff3}), so are  the four transfer matrices  of (\ref{eq:closed_loop}).  We conclude that $K=YX^{-1}\in \mathcal{C}_{\text{stab}}$.

	 \subsection{Proof of Theorem~\ref{th:Youla_eq}}
1) Let $Q \in \mathcal{RH}_\infty^{m \times p}$ and consider the transfer matrices specified in (\ref{eq:Q_to_X_1})-(\ref{eq:Q_to_X_2}). Clearly, these transfer matrices are stable. We next verify (\ref{eq:aff1}). Notice that by (\ref{eq:dp2}) we have $U_r-GV_r=M_l^{-1}$. Hence,
\begin{align*}
X-GY&=(U_r-N_rQ-GV_r+GM_rQ)M_l\\
&=(M_l^{-1}-N_rQ+N_rM_r^{-1}M_rQ)M_l=I\,,
\end{align*}
and
\begin{align*}
W-GZ&=U_rN_l-N_rQN_l-G-GV_rN_l+GM_rQN_l\,,\\
&=M_l^{-1}N_l-G-N_rQN_l+N_rM_r^{-1}M_rQN_l=0\,.
\end{align*}
Finally, we verify (\ref{eq:aff2}). We have
\begin{align*}
-XG+W&=-U_rM_lG+N_rQM_lG+U_rN_l-N_rQN_l\\
&=(N_rQ-U_r)(M_lG-N_l)=0\,,
\end{align*}
and
\begin{align*}
-YG+Z&=\hspace{-0.05cm}-(V_r\hspace{-0.05cm}-\hspace{-0.05cm}M_rQ)M_lG+I+(V_r\hspace{-0.05cm}-\hspace{-0.05cm}M_rQ)N_l=I\,.
\end{align*}
Last, it is immediate to verify that $YX^{-1}=(V_r-M_rQ)(U_r-N_rQ)^{-1}$.

2) Let $(X,Y,W,Z)$ satisfy (\ref{eq:aff1})-(\ref{eq:aff3}) and $Q$ be chosen as per (\ref{eq:Youla_with_XYWZ}).  Clearly, $Q \in \mathcal{RH}_\infty^{m \times p}$. It remains to verify that  $(V_r-M_rQ)(U_r-N_rQ)^{-1}=YX^{-1}$. Notice that by (\ref{eq:aff1})-(\ref{eq:aff2}) we have $X=I+GY$, $W=G+GYG$ and $Z=I+YG$. Hence, 
\begin{align*}
Q=(U_l-V_lG)V_r-(U_l-V_lG)Y(U_r-GV_r)\,.
\end{align*}
Since we have $U_l-V_lG=M_r^{-1}$ by (\ref{eq:dp2}), we conclude
\begin{equation*}
Q=M_r^{-1}V_r-M_r^{-1}YM_l^{-1}\,,
\end{equation*}
from which we deduce $Y=(V_r-M_rQ)M_l$.
By using  the relationship $U_r-GV_r=M_l^{-1}$, it follows that
\begin{equation*}
X=I+GY=(U_r-N_rQ)M_l\,,
\end{equation*}
thus proving that $YX^{-1}=(V_r-M_rQ)(U_r-N_rQ)^{-1}$.
 


			\bibliographystyle{IEEEtran}

		\scriptsize{
	\bibliography{IEEEabrv,references}
	}

	\end{document}